\documentclass[11pt]{article}

\usepackage[T1]{fontenc}
\usepackage[utf8]{inputenc}
\usepackage{lmodern}
\usepackage{microtype}
\usepackage{geometry}
\usepackage{amsmath,amssymb,amsthm,mathtools}

\usepackage{enumitem}
\usepackage{booktabs}

\usepackage{algorithm}
\usepackage{algpseudocode}

\usepackage[hidelinks]{hyperref}
\usepackage[nameinlink,noabbrev]{cleveref}

\theoremstyle{definition}
\newtheorem{definition}{Definition}[section]

\newtheorem{remark}{Remark}[section]
\theoremstyle{plain}
\newtheorem{proposition}[definition]{Proposition}
\newtheorem{theorem}[definition]{Theorem}
\newtheorem{conjecture}[definition]{Conjecture}

\newcommand{\Sig}{\Sigma}

\newcommand{\Thy}{T}
\newcommand{\Struct}{\mathcal{A}}
\newcommand{\Univ}{\left|\Struct\right|}
\newcommand{\N}{\mathbb{N}}
\newcommand{\Pow}{\mathcal{P}}

\newcommand{\forces}{\Vdash}              
\newcommand{\kforces}{\Vdash_k}           
\newcommand{\kbforces}{\Vdash_{k,b}}      

\newcommand{\cost}{\mathsf{cost}}

\newcommand{\Cert}{\mathsf{Cert}}
\newcommand{\Obl}{\mathsf{Obl}}

\title{Ultraconstructive Model Theory via Bounded Adversarial Finite Structures}
\author{
  Mirco A. Mannucci\\
  \textit{HoloMathics LLC}\\
  \texttt{mirco@holomathics.com}
}
\date{\today}

\begin{document}
\maketitle

\begin{abstract}
Ultraconstructive Model Theory (UCMT) replaces idealized satisfaction, at finite computational scale,
by bounded adversarial survival.  A finite partial structure is tested by an Opponent (Devil) drawing
legal challenges from a bounded attack surface, repaired by a Builder (God) through legal replies, and
certified by a symbolic Judge.  This defines a bounded forcing relation $\kbforces$ and three outcomes:
$\textsc{God\_Wins}$, $\textsc{Draw}$, and $\textsc{Devil\_Wins}$.  We prove a self-contained finite
metatheory for bounded episodes: termination, Judge-relative soundness of $\textsc{God\_Wins}$, soundness
of $\textsc{Devil\_Wins}$ as a bounded obstruction certificate, exhaustive bounded completeness over finite
completion spaces, and a neural-admissibility theorem showing that learned policies preserve logical
soundness when they only select among legal symbolic moves.  A prototype implementation, ADAMANTIUM,
realizes the God--Devil--Judge loop on controlled cyclic tasks: a $3$-element instance yields
$\textsc{God\_Wins}$, a $2$-element impossibility yields a certified $\textsc{Devil\_Wins}$ obstruction
(128 completions checked, zero winning completions, no budget exhaustion, Judge verified), and a minimal
judged co-training loop updates both neural policies over legal moves.  The connection to MTU-II /
Esenin--Volpin semantics is stated only as a conditional bridge.  The experiments are deliberately tiny;
we do not claim general finite model-finding performance, first-order completeness, or a complete theorem
prover.
\end{abstract}

\medskip
\noindent\textbf{2020 Mathematics Subject Classification.}
Primary 03C13; Secondary 03B70, 68Q19, 68T07, 68T27.

\noindent\textbf{ACM CCS Concepts.}
Theory of computation~$\rightarrow$ Finite Model Theory; Theory of computation~$\rightarrow$ Logic and verification;
Computing methodologies~$\rightarrow$ Machine learning.

\noindent\textbf{arXiv subject classes.}
cs.LO; math.LO; cs.AI.

\noindent\textbf{Keywords.}
finite model theory; bounded semantics; adversarial model construction; obstruction certificates;
neural-symbolic reasoning; ultraconstructive mathematics.

\tableofcontents

\section{Introduction}

Classical model theory evaluates a completed structure $\Struct$ by the ideal relation
$\Struct\models\Thy$.  UCMT asks for a finite, operational substitute: what can be certified when
structures are partial, resources are bounded, and the relevant tests are chosen adversarially?  The
central proposal is that, at finite computational scale, modelhood should be replaced by bounded
adversarial survival.  A partial structure is not declared a model once and for all; it survives a
specified attack surface, depth, and budget.

The formal object is a game between a Builder, an Opponent, and a symbolic Judge.  The Builder (God)
proposes legal semantic edits; the Opponent (Devil) selects legal challenges; the Judge is the only
source of logical verdicts.  The resulting bounded forcing notation $\Struct\kbforces\Thy$ means that
$\Struct$ survives the declared depth-$k$, budget-$b$ episode, not that it satisfies $\Thy$ in the
unbounded classical sense.  The three possible outcomes are $\textsc{God\_Wins}$, $\textsc{Draw}$, and
$\textsc{Devil\_Wins}$, the last requiring a bounded obstruction certificate rather than a mere timeout.

Two earlier strands motivate the definitions but are not needed for the finite results.  MTU-II
\cite{Mannucci2023MTU2} provides the bounded Kripke/Esenin--Volpin background: finite stages, bounded
term generation, and $k$-validity.  LOGAN \cite{Mannucci2025LOGAN} provides the adversarial interface:
depth-bounded logical probes, witnesses, and EF-style Opponents.  In the present paper these become a
self-contained finite game first; the bridge back to MTU-II is postponed to \Cref{sec:kcons-bridge} and
kept explicitly conditional.

\paragraph{Contributions.}
\begin{enumerate}[leftmargin=1.5em]
\item A finite bounded God--Devil--Judge semantics for partial finite structures, with attack surfaces,
obligations, certificates, and the bounded forcing relation $\kbforces$.
\item A self-contained finite metatheory of bounded episodes (\Cref{sec:metatheory}): termination,
Judge-relative soundness of $\textsc{God\_Wins}$, bounded-obstruction soundness of $\textsc{Devil\_Wins}$,
exhaustive bounded completeness over finite completion spaces, and neural admissibility.
\item A prototype apparatus, ADAMANTIUM, in which neural policies select only among legal symbolic moves
while the Judge certifies all outcomes.
\item Controlled cyclic experiments: a $3$-element $\textsc{God\_Wins}$, a certified
$2$-element $\textsc{Devil\_Wins}$ obstruction, and a minimal judged co-training loop updating both
neural policies.
\end{enumerate}

\paragraph{Claim boundary.}
The finite UCMT propositions are the mathematical content proved here.  We do not prove a general theorem
prover, a complete finite model finder, a full logic-driven GAN, or an equivalence with MTU-II.  The
experiments are intentionally small and are used to validate the semantics and certificate discipline, not
to establish scaling.

\section{Background I: Esenin--Volpin models as resource-bounded Kripke semantics}

This section restates the pieces of the MTU-II pipeline that UCMT will reuse: finite rooted frames,
bounded term generation, modified forcing, and bounded validity.

\subsection{Kripke frames, bounded depth, and the ``world as stage'' reading}

\begin{definition}[Finite rooted Kripke frame]
A \emph{finite rooted Kripke frame} is a triple $(W,\le,r)$ where:
\begin{itemize}[leftmargin=1.5em]
\item $W$ is a finite set of worlds,
\item $\le$ is a partial order (accessibility),
\item $r\in W$ is the root and $r\le w$ for all $w$ in the connected component.
\end{itemize}
The \emph{depth} of $w$ is the length of the longest chain from $r$ to $w$.
\end{definition}

\begin{remark}[Bounded depth is not cosmetic]
In ultrafinitism, allowing arbitrary depth silently reintroduces ``infinite verification''.
MTU-II enforces a bounded frame depth and then introduces $k$-validity precisely to avoid a collapse where
full (unbounded) Kripke validity becomes too restrictive under the altered forcing dynamics.
\end{remark}

\subsection{Volpin-style bounded term generation}

We need a formal handle for ``worlds are not globally term-closed.''

\begin{definition}[Term-generation operator]
Fix a signature $\Sig$ and a set of ground symbols available at a world $w$ (constants, and already-built elements).
A \emph{term-generation operator} is a map
\[
\mathsf{Gen}_w : \Pow(D(w)) \to \Pow(D(w))
\]
intended to represent ``one generation step'' of term construction. One can iterate:
$\mathsf{Gen}_w^{(0)}(S)=S$ and $\mathsf{Gen}_w^{(n+1)}(S)=\mathsf{Gen}_w(\mathsf{Gen}_w^{(n)}(S))$.
\end{definition}

\begin{definition}[Feasible terms at depth $k$ (schema)]
Given a seed set $S_w\subseteq D(w)$, define the \emph{feasible closure up to depth $k$}:
\[
\mathrm{Cl}_w^{\le k}(S_w) \;=\; \bigcup_{i=0}^k \mathsf{Gen}_w^{(i)}(S_w).
\]
A key MTU-II move is that forcing at $w$ only quantifies over objects in $\mathrm{Cl}_w^{\le k}$ (or an analogous
resource-bounded domain), rather than over an unbounded term model.
\end{definition}

\subsection{Esenin--Volpin forcing and $k$-validity (spine)}

\begin{definition}[Esenin--Volpin model (condensed schema)]
An \emph{Esenin--Volpin model} over $\Sig$ consists of:
\begin{itemize}[leftmargin=1.5em]
\item a finite rooted frame $(W,\le,r)$,
\item a domain assignment $w\mapsto D(w)$ with monotonicity: $w\le w'\Rightarrow D(w)\subseteq D(w')$,
\item interpretations of $\Sig$ at each world compatible with extension,
\item a forcing relation $w\forces \varphi$ whose clauses are modified to respect feasible closure / bounded generation.
\end{itemize}
(See \cite{Mannucci2023MTU2} for the precise forcing clauses and motivation.)
\end{definition}

\begin{definition}[$k$-validity]
For a formula $\varphi$, define $\kforces \varphi$ to mean ``$\varphi$ holds at all worlds up to depth $k$''
(or the corresponding MTU-II bounded validity notion).
\end{definition}

\begin{definition}[$k$-consistency (proof-theoretic companion)]
A theory $\Thy$ is \emph{$k$-consistent} if no contradiction has a proof of depth $\le k$ in the chosen system.
MTU-II relates bounded forcing/validity to $k$-consistency as the proof-theoretic control parameter.
\end{definition}

\begin{remark}[Why UCMT needs this spine]
UCMT will preserve the \emph{world-as-stage} semantics and boundedness knobs, but it will replace the implicit
``test by extension'' behavior with an explicit Opponent that selects which bounded instances are checked and
returns concrete witnesses.
\end{remark}

\section{Background II: LOGAN and explicit EF-Opponents}

LOGAN \cite{Mannucci2025LOGAN} provides the operational interface we need:
depth-bounded logical probes, witness extraction, and a builder/repair loop.

\subsection{Depth as a logical budget: quantifier rank and EF rounds}

\begin{definition}[Depth parameter]
In LOGAN, ``depth'' can be realized as:
\begin{itemize}[leftmargin=1.5em]
\item quantifier rank $\le k$ (FO),
\item $k$-round Ehrenfeucht--Fra\"{\i}ss\'e games (structure indistinguishability),
\item bounded tester adaptivity depth.
\end{itemize}
For UCMT, depth will index the strength of the Opponent's challenge language.
\end{definition}

\begin{definition}[Witness contract (LOGAN-style)]
A \emph{witness} is a small, checkable object that certifies failure:
a tuple falsifying an axiom instance, a substructure pattern (e.g.\ an odd cycle),
or an EF transcript that pinpoints a mismatch. The key requirement is that witnesses are cheap to verify.
\end{definition}

\begin{remark}
This witness contract is what makes UCMT experimental: we can log witness types, repair costs, and resilience curves,
instead of only saying ``model / not model''.
\end{remark}

\section{UCMT: explicit adversaries over Volpin-style stages}

\subsection{Partial structures, unknowns, and revision policies}

\begin{definition}[Partial $\Sig$-structure with unknowns]\label{def:partial}
A partial $\Sig$-structure $\Struct$ has finite domain $\Univ$ and partial interpretations:
\begin{itemize}[leftmargin=1.5em]
\item for each relation symbol $R$, an interpretation $R^\Struct(\bar a)\in\{0,1,?\}$,
\item for each function symbol $f$, a partial map $f^\Struct:\Univ^n\rightharpoonup \Univ$ (possibly undefined).
\end{itemize}
\end{definition}

\begin{definition}[Revision policy and cost]
A \emph{revision policy} specifies which previously committed facts may be changed and at what cost.
We model this abstractly by a cost functional $\cost(\text{edit})\in\N$ and a bound (or penalty weight) on total cost.
Monotone policies forbid revision; revising policies permit it but log and penalize it.
\end{definition}

\subsection{Obligations as operational existentials}

\begin{definition}[Obligation store]
An \emph{obligation store} $\Obl$ is a finite multiset of demands of the form
\[
(\exists y\,\psi(\bar a,y),\bar a)
\]
where $\bar a\in\Univ$ and $\psi$ is an allowed formula schema. Discharging an obligation means producing $b$
and committing $\psi(\bar a,b)$ (possibly by extending $\Univ$).
\end{definition}

\subsection{Certificates: what has been validated must remain stable}

\begin{definition}[Certificate store]
A \emph{certificate store} $\Cert$ is a finite set of validated items, e.g.:
\begin{itemize}[leftmargin=1.5em]
\item validated axiom instances $\varphi(\bar a)$,
\item derived invariants (normal forms, closure properties),
\item ``do not break'' constraints produced during repair.
\end{itemize}
A repair step is \emph{admissible} if it preserves all items in $\Cert$ (or preserves them up to an allowed weakening).
\end{definition}

\begin{remark}
Without certificates, a repair loop can thrash: fix one witness by breaking something previously stable.
Certificates are the ultraconstructive analogue of ``proof obligations already discharged.''
\end{remark}

\subsection{Challenge languages, depth, and budget}

\begin{definition}[Challenge language $\mathcal{C}_k$]
A challenge language $\mathcal{C}_k$ is a set of Opponent moves bounded by depth $\le k$.
Canonical instances:
\begin{itemize}[leftmargin=1.5em]
\item \textbf{Equational/Horn mode:} instantiate universal Horn clauses; witness is a falsifying tuple.
\item \textbf{EF mode:} a $k$-round EF probe; witness is a transcript or distinguishing pattern.
\item \textbf{Tester mode:} bounded-query tests; witness is a local counterexample.
\end{itemize}
\end{definition}

\begin{definition}[Budget]
The budget $b$ bounds the number of challenges the Opponent may issue in one evaluation episode.
We treat $b$ as the experimental knob controlling the intensity of scrutiny at fixed depth $k$.
\end{definition}

\subsection{The UCMT game and bounded forcing}

\begin{definition}[UCMT game $G(\Sig,\Thy;k,b)$]
A position is $(\Struct,\Cert,\Obl,t)$ where $\Struct$ is partial, $\Cert$ certificates, $\Obl$ obligations, and $t\le b$ used budget.
Each round:
\begin{enumerate}[leftmargin=1.5em]
\item Opponent picks $c\in\mathcal{C}_k$ and returns either:
  \begin{itemize}[leftmargin=1.5em]
  \item a witnessed violation $w$ (an object proving that some induced constraint of $\Thy$ fails on current commitments), or
  \item a demand producing a new obligation.
  \end{itemize}
\item Builder applies admissible edits (respecting $\Cert$) to repair $w$ and/or discharge obligations,
      optionally extending $\Univ$, and may add newly validated items to $\Cert$.
\end{enumerate}
Builder wins if it survives all Opponent moves up to budget $b$ without violating certificate admissibility
(or exceeding allowed revision cost).
\end{definition}

\begin{definition}[UCMT bounded forcing / $(k,b)$-models]
We write $\Struct \kbforces \Thy$ if Builder has a winning strategy in $G(\Sig,\Thy;k,b)$ from the initial state.
A total finite $\Struct$ is a \emph{$(k,b)$-model} of $\Thy$ if $\Struct\kbforces \Thy$.
\end{definition}

\begin{definition}[Outcomes of a bounded episode]\label{def:outcomes}
We name three game outcomes (\Cref{sec:devilwins} develops the second):
\begin{itemize}[leftmargin=1.5em]
\item $\textsc{God\_Wins}$: Builder (God) survives the episode and produces a $(k,b)$-model candidate (with, in refute mode, a witness against the target claim).
\item $\textsc{Devil\_Wins}$: Opponent (Devil) produces a \emph{bounded obstruction certificate} (\Cref{def:obstruction}) --- not mere search failure.
\item $\textsc{Draw}$: the budget is exhausted with neither a certified construction nor a certified obstruction.
\end{itemize}
The conceptual roles are: Builder $=$ God, Opponent $=$ Devil, and the symbolic forcing/verification layer $=$ Judge.
\end{definition}

\subsection{Relation to MTU-II: UCMT as ``explicit adversary'' refinement}

\begin{proposition}[MTU-II as monotone special case (informal)]
If Builder is monotone (no revision), certificates are exactly the monotone forcing facts, and the Opponent is restricted
to ``advance to a successor world'' (rather than selecting arbitrary probes), then UCMT reduces to a bounded Kripke-style
evaluation regime and aligns with the MTU-II forcing/$k$-validity spine.
\end{proposition}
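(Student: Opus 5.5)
The plan is to make the informal statement precise by fixing the restricted game instance and then exhibiting a direct correspondence between Builder winning strategies and Kripke-style forcing at bounded depth. I will take an Esenin--Volpin model with finite rooted frame $(W,\le,r)$ and domains $D(w)$, and build the restricted game $G(\Sig,\Thy;k,b)$ as follows. Positions are $(\Struct_w,\Cert_w,\Obl_w,t)$, where $\Struct_w$ is the partial structure at stage $w$: relations are set to $1$ where forced at $w$ and to $?$ otherwise, and the domain is $\mathrm{Cl}_w^{\le k}(S_w)$. The certificate store $\Cert_w$ is the set of atomic (more generally, monotone) facts forced at $w$. An Opponent move in $\mathcal{C}_k$ is just a choice of a successor $w'\ge w$, so the depth along a play equals the frame depth. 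I will take $b\ge k$, so the budget never binds before the depth does. A Builder move is forced to be monotone: it may only add facts, and its revision cost must be $0$.

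The key steps are the following.

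\begin{enumerate}[leftmargin=1.5em]
\item \textbf{Admissibility is persistence.} A Builder move from $w$ to $w'$ is admissible, i.e.\ preserves $\Cert_w$, exactly when the Kripke monotonicity conditions hold: $D(w)\subseteq D(w')$ and compatibility of interpretations. This follows by unfolding the definitions.
\item \textbf{Obligations are bounded existentials.} An obligation $(\exists y\,\psi(\bar a,y),\bar a)$ is dischargeable at $w'$ iff some $b\in\mathrm{Cl}_{w'}^{\le k}$ has $w'\forces\psi(\bar a,b)$. This matches the modified MTU-II existential clause.
\item \textbf{Main correspondence.} By induction on the remaining depth $k-\mathrm{depth}(w)$, Builder has a winning strategy from position $w$ iff $w\forces\varphi$ for every axiom instance $\varphi$ of $\Thy$ at all $w'\ge w$ within depth $k$. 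The induction is on both the frame depth and the formula. The base case is a leaf world, where there are no Opponent moves, so Builder wins iff no witnessed violation exists at $w$. In the inductive step, a universal or implication clause is a quantification over successors, which matches the Opponent's choice. An existential clause matches the Builder's discharge.
\item \textbf{Instantiate at the root.} Taking $w=r$ in step 3 gives $\Struct_r\kbforces\Thy$ iff $\kforces\Thy$ in the sense of $k$-validity. This is the claimed alignment.
\end{enumerate}

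I expect the main obstacle to be step 3, in particular the implication and negation clauses. In Kripke semantics these quantify over all extensions. In the game, however, a violation must be produced as a concrete witness at the current stage, so the Opponent has to be allowed to ``advance then exhibit''. I would first try to handle this by letting one Opponent move consist of an advance to $w'$ together with a witness at $w'$. I would then check that this still costs one unit of depth.

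A second issue is that the MTU-II forcing clauses are only cited, not stated. So I would state the result relative to a clause-by-clause matching hypothesis: each MTU-II clause is either monotone-atomic, a universal quantification over successors, or a bounded existential. The informal proposition would then be proved under exactly that hypothesis, which is the appropriate precision for an ``informal'' statement.
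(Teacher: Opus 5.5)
The paper gives no proof of this proposition: it is labelled informal and is followed only by a remark on the conceptual hierarchy. Your plan therefore has to stand on its own. Its overall shape is reasonable, but step~3 fails exactly where you suspect, and your proposed repair does not fix it.

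To refute $\varphi\to\psi$ at $w$, the Opponent must reach some $w'\ge w$ and \emph{establish} $w'\Vdash\varphi$, i.e.\ act as verifier of the antecedent. When $\varphi$ contains $\to$, $\neg$ or $\forall$, the fact $w'\Vdash\varphi$ is a property of the whole cone above $w'$. For example, with $\Thy=\{(p\to q)\to r\}$ the Opponent needs $w'\Vdash p\to q$. That claim can only be contested by the \emph{Builder} advancing to some $w''\ge w'$ that forces $p$ but not $q$. So there is no local object to ``exhibit'' at $w'$, and checking the antecedent cannot cost one unit of depth. The game that matches forcing is a role-switching evaluation game, which lies outside the ``Opponent only advances'' regime of the statement.

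Your clause-matching hypothesis does not exclude this case, because the Kripke implication clause \emph{is} a universal quantification over successors; the trouble is its antecedent. The clean repair is to restrict $\Thy$ to geometric sequents $\forall\bar x\,(\varphi\to\psi)$ with $\varphi,\psi$ built from atoms by $\wedge,\vee,\exists$. This covers universal Horn theories and the cyclic theory of the experiments. Then:
\begin{itemize}
\item $w'\Vdash\varphi(\bar a)$ has a finite, local, persistent witness that the Opponent can exhibit;
\item $\psi(\bar a)$ becomes exactly an entry of $\Obl$;
\item with the Builder constrained as below, and $b$ large enough to reach depth $k$ and challenge once, Builder wins iff $M$ forces every axiom at every world of depth $\le k$.
\end{itemize}
No induction on formulas is needed in that setting.

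Second, your Builder is too free for step~2 and for the left-to-right half of step~3. In the paper's game, discharging $(\exists y\,\psi(\bar a,y),\bar a)$ means \emph{committing} $\psi(\bar a,b)$, and unknown cells are repaired by committing values. A merely monotone Builder can therefore discharge an atomic obligation, or fill a ``$?$'' cell, by fiat where $M$ forces nothing of the kind. Relatedly, your base case treats an unforced atom at a leaf as ``$?$'', which is not a witnessed violation of committed facts. In Kripke semantics that atom is false there and can falsify an axiom such as $\forall x\,E(x,s(x))$.

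You must pick one of two readings:
\begin{itemize}
\item \emph{Evaluation reading.} The Builder's legal replies are exactly the facts forced in $M$ at the current world, and unforced atoms are demands it must meet from $M$. Step~1 then follows automatically from the monotonicity of $M$, rather than being an ``iff'' about Builder moves, and step~4 gives $M\kforces\Thy$ for this fixed $M$.
\item \emph{Construction reading.} The Builder constructs the successor stages. The right-hand side of step~4 must then be existential: some Esenin--Volpin model extending the root stage forces $\Thy$ to depth $k$. That is the territory of \Cref{thm:kcons-exists-kb}, not $k$-validity.
\end{itemize}
As written you mix the two: $\Struct_w$ is read off a fixed $M$, yet step~1 speaks of a Builder move from $w$ to $w'$ when it is the Opponent who picks $w'$.

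Finally, you need to assume $\mathrm{Cl}^{\le k}_w(S_w)\subseteq\mathrm{Cl}^{\le k}_{w'}(S_{w'})$ for $w\le w'$. This does not follow from $D(w)\subseteq D(w')$, since the per-world operators $\mathsf{Gen}_w$ are unrelated.
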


\begin{remark}
This is the correct conceptual hierarchy: MTU-II gives the frame/forcing boundedness; UCMT adds an explicit
LOGAN-style selection of which bounded instances are tested, and a witness/repair operational layer.
\end{remark}

\subsection{Parameter effects}

\begin{proposition}[Budget monotonicity]
If $\Struct \kbforces \Thy$ then $\Struct \Vdash_{k,b'} \Thy$ for all $b'\le b$.
\end{proposition}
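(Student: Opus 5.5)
The plan is a strategy-restriction argument: a winning Builder strategy in $G(\Sig,\Thy;k,b)$ is still winning in $G(\Sig,\Thy;k,b')$ for $b'\le b$. The underlying point is that the two games share the same positions, the same legal moves for both players, and the same admissibility and certificate constraints. They differ only in the cap on the used-budget counter $t$. So every play of the $b'$-game is, up to its final round, an initial segment of a play of the $b$-game.

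\textbf{Step 1 (embedding of plays).} Fix a winning Builder strategy $\sigma$ for $G(\Sig,\Thy;k,b)$ from the initial state $(\Struct,\Cert_0,\Obl_0,0)$. Let $\sigma'$ be the strategy for $G(\Sig,\Thy;k,b')$ obtained by following $\sigma$ on the current history. This is well defined because, while $t\le b'\le b$, every history of the $b'$-game is a legal history of the $b$-game. The challenge language $\mathcal{C}_k$ depends only on $k$, and Builder's admissible edits depend only on $(\Struct,\Cert,\Obl)$ and the revision policy, not on $b$.

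\textbf{Step 2 (transfer of the winning condition).} Let $\pi'$ be any play of the $b'$-game consistent with $\sigma'$. It has at most $b'$ Opponent rounds. Opponent in the $b$-game may continue after round $b'$ in any legal way, for instance by repeating an already-answered challenge. Doing so extends $\pi'$ to a play $\pi$ of the $b$-game consistent with $\sigma$. Since $\sigma$ is winning, along $\pi$ Builder never violates certificate admissibility and never exceeds the allowed revision cost. These are safety conditions, closed under taking prefixes, so they hold along the prefix $\pi'$ as well. Hence $\pi'$ is a win for Builder, and therefore $\Struct\Vdash_{k,b'}\Thy$.

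\textbf{Main obstacle.} The real difficulty is definitional. The argument needs Builder's winning condition to be a pure survival (safety) condition that does not depend on $b$. Two features of the definitions could break this. First, if the revision-cost bound were tied to $b$, say proportional to it, then passing to $b'$ could shrink the allowed cost. Second, if $\textsc{God\_Wins}$ required something to be produced at the end, such as a total $(k,b)$-model candidate or all obligations discharged, a truncated play might not meet it. In the first case, I would read the revision-cost bound as a fixed parameter independent of $b$, as in the definition of revision policy. For the second, I would note that the definition of the UCMT game states winning purely as survival ``up to budget $b$''. If the extension in Step 2 needs Opponent to have at least one legal move, I would either observe that $\mathcal{C}_k$ is nonempty or simply let the $b$-game play stop early, since a win with fewer challenges is still a win.
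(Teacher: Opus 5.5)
The paper states this proposition without proof; your strategy-restriction argument is the natural justification from the game definition and is correct. You copy the winning $b$-strategy and use that Builder's winning condition in $G(\Sig,\Thy;k,b)$ is a prefix-closed survival condition independent of $b$, and you rightly make explicit the two assumptions it needs: a $b$-independent revision-cost bound, and no terminal requirement such as $\Obl=\emptyset$. You are also right to read $b$ as the cap on Opponent challenges in the game rather than as the node budget of \Cref{def:episode}; under the node-budget reading monotonicity would run the other way, since shrinking a node budget can turn $\textsc{God\_Wins}$ into $\textsc{Draw}$.
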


\begin{conjecture}[Depth monotonicity in stable regimes]
Under universal Horn $\Thy$ and monotone builder policies, $\Struct \kbforces \Thy$ implies $\Struct \Vdash_{k',b}\Thy$
for $k'\le k$.
\end{conjecture}

\section{UCMT-CEGIS: witness-guided model synthesis}

\begin{algorithm}[h]
\caption{UCMT-CEGIS (witness-guided synthesis)}
\begin{algorithmic}[1]
\Require $\Sig,\Thy,k,b$, max domain size $n_{\max}$, builder policy $\pi$, opponent policy $\delta$
\State Initialize partial $\Struct \gets \Struct_0$, $\Cert\gets\emptyset$, $\Obl\gets\emptyset$
\For{$iter=1$ to \textbf{MaxIters}}
  \State Opponent($\delta$): issue up to $b$ depth-$k$ challenges; collect witnesses $W$ and demands $D$
  \State Builder($\pi$): for each $w\in W$, propose admissible repair edits minimizing $\cost$
  \State Builder($\pi$): add obligations from $D$ and attempt discharge (witness construction)
  \State Update $\Cert$ with newly validated instances (optional but recommended)
  \If{$W=\emptyset$ and $\Obl=\emptyset$}
     \State \Return $\Struct$ as a $(k,b)$-model candidate
  \EndIf
  \If{$\Univ>n_{\max}$} \State \Return \textbf{FAIL} \EndIf
\EndFor
\State \Return \textbf{UNKNOWN}
\end{algorithmic}
\end{algorithm}

\begin{remark}[Witness-shaped repairs]
Repairs should be ``local to the witness'': the edit set is constrained to the small region of $\Struct$
that the witness touches. This is exactly what makes repair cost measurable and supports ablation studies.
\end{remark}

\section{$\textsc{Devil\_Wins}$ and bounded obstruction certificates}
\label{sec:devilwins}

We define the Devil's win inside the UCMT game semantics, not merely as a
software flag.

\begin{definition}[Bounded obstruction certificate / $\textsc{Devil\_Wins}$]\label{def:obstruction}
Fix $(\Sig,\Thy,\text{goal})$, a finite domain $\Univ$, a depth $k$, and a budget $b$.
A \emph{bounded obstruction certificate} is a verifiable object showing that
\emph{every} legal Builder continuation within the stated finite domain and bounds
fails to reach a $(k,b)$-model achieving the goal --- or that the game has reached
a verified obstruction state (e.g.\ a committed configuration the Judge certifies
cannot be completed). $\textsc{Devil\_Wins}$ is declared exactly when such a
certificate is produced.
\end{definition}

\begin{remark}[Scope of a prototype obstruction certificate]\label{rem:obstruction-scope}
The certificate may be produced by \emph{exhaustive bounded checking}
on tiny domains (as in Demo~B, \Cref{sec:experiments}). This is \emph{not} a general impossibility theorem:
it is a bounded obstruction certificate for the \emph{stated} finite domain,
language fragment, depth $k$, and budget $b$. Lifting bounded obstruction to a
$k$-inconsistency claim requires the conditional bridge of
\Cref{sec:kcons-bridge} (in particular \Cref{prop:uniform-failure-refutation}),
including a complete bounded challenge language and a sound Judge.
\end{remark}

\section{Finite bounded metatheory of the episode}
\label{sec:metatheory}

The results of this section are \emph{self-contained}: they concern the finite,
fully enumerable version of the UCMT game that the prototype (\Cref{sec:apparatus})
actually runs, and they do not depend on the MTU-II bridge of
\Cref{sec:kcons-bridge}. All soundness statements are explicitly \emph{relative to
the symbolic Judge}: the Judge fixes which fragment (depth-$\le k$ ground instances
of $\Thy$ and the goal) is checked, and the metatheory is exactly as strong as that
specification.

\subsection{The bounded episode as a finite object}

\begin{definition}[Cells and completion space]\label{def:completion}
Fix a finite signature $\Sig$ and a finite domain $[n]=\{0,\dots,n-1\}$. The \emph{cells}
are the finitely many relation tuples $R(\bar a)$, function points $f(\bar a)$, and
constants $c$ ($\bar a\in[n]^{\mathrm{ar}}$). A partial $\Sig$-structure (\Cref{def:partial})
assigns each cell a value in its finite value set --- $\{0,1\}$ for relation cells, $[n]$ for
function/constant cells --- or leaves it \emph{unknown}. A \emph{completion} of a partial
$\Struct$ fills every unknown cell. The set $\mathrm{Comp}(\Struct)$ of total completions is
finite, with $|\mathrm{Comp}(\Struct)| = \prod_{c\ \text{unknown}} |\mathrm{val}(c)|$.
\end{definition}

\begin{definition}[Legal moves, legal replies, Judge]\label{def:legal}
At a state with partial structure $\Struct$:
\begin{itemize}[leftmargin=1.5em]
\item the \emph{legal Devil challenges} $L_D(\Struct)$ form the finite set of bounded obligations
--- a blocking (unknown) cell of an unsatisfied depth-$\le k$ clause instance of $\Thy$, or, once
$\Thy$ is locally satisfied, a goal cell to commit. Its size is bounded by the number of depth-$\le k$
ground instances, hence finite.
\item given a challenge with target (unknown) cell, the \emph{legal Builder replies}
$L_B(\Struct,\text{move})$ form the finite set of edits assigning that cell a value in its value set.
\item the \emph{Judge} is a total computable map $J(\Struct)\in\{\textsc{won},\textsc{theory\_failed},
\textsc{dead\_end},\textsc{open}\}$ obtained by running a bounded checker over the finite set of
depth-$\le k$ ground instances of $\Thy$ and the goal. In refute mode, $J(\Struct)=\textsc{won}$ iff
no checked instance of $\Thy$ is violated and the goal claim is falsified on the current commitments.
\end{itemize}
\end{definition}

\begin{definition}[Bounded episode]\label{def:episode}
A \emph{bounded episode} with node budget $b\in\N$ is the depth-first exploration, from the empty
structure, of the tree whose nodes are partial structures: at each $\textsc{open}$ node the Devil
selects one challenge from $L_D$, the Builder replies range over $L_B$, and each reply commits exactly
one previously-unknown cell (creating a child node). Each visited node is charged one unit against $b$.
The episode reports an outcome as in \Cref{def:outcomes}.
\end{definition}

\subsection{Termination}

\begin{proposition}[Finite episode termination]\label{prop:termination}
For fixed finite $\Sig$, finite domain $[n]$, depth $k$, the finite attack surface $L_D$ and finite
reply sets $L_B$ of \Cref{def:legal}, and a finite node budget $b$, every bounded episode halts and
returns exactly one of $\textsc{God\_Wins}$, $\textsc{Draw}$, or $\textsc{Devil\_Wins}$.
\end{proposition}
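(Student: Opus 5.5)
The argument is a well-founded measure on the exploration tree combined with a case analysis on how the depth-first search stops. The measure I will use is the number $u(\Struct)$ of unknown cells. By \Cref{def:completion}, the empty structure on $[n]$ over a finite $\Sig$ has finitely many cells, say $N$. By \Cref{def:episode}, every Builder reply commits exactly one previously-unknown cell, so each child has $u$ one smaller than its parent. The tree therefore has depth at most $N$.

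\textbf{Finite branching.} At each node the Devil selects a single challenge from $L_D(\Struct)$, which is finite because it is bounded by the number of depth-$\le k$ ground instances. The replies range over $L_B$, whose size is at most $\max_c|\mathrm{val}(c)|\le\max(2,n)$. The tree is thus finitely branching of depth $\le N$, and by K\"onig's lemma (or directly) it has at most $\sum_{i\le N}\max(2,n)^i$ nodes.

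\textbf{Halting.} Depth-first exploration visits each node at most once and charges one unit per visit. It therefore halts after at most $\min(b,\text{tree size})$ node visits. Each visit also does only finitely much work: evaluating $J$ is a total computable bounded check by \Cref{def:legal}, and computing $L_D,L_B$ is a finite enumeration. This settles halting independently of $b$, and a fortiori with $b$ finite.

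\textbf{Exactly one outcome.} Here I split on how the search stopped, and the three cases are mutually exclusive.
\begin{enumerate}[leftmargin=1.5em]
\item Some visited node has $J=\textsc{won}$. Then the procedure returns $\textsc{God\_Wins}$ with that structure as the $(k,b)$-model candidate.
\item The search exhausts the tree within budget and every leaf is judged $\textsc{theory\_failed}$ or $\textsc{dead\_end}$. Then the completed search record is itself a bounded obstruction certificate in the sense of \Cref{def:obstruction}: every legal Builder continuation in the domain and bounds fails. So $\textsc{Devil\_Wins}$ is returned.
\item The budget $b$ is hit before either case occurs. Then $\textsc{Draw}$ is returned, as in \Cref{def:outcomes}.
\end{enumerate}
The outcome is unique because the procedure returns at the first of these events, and because case 3 is by definition the residual case in which no certificate of either kind exists.

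The main obstacle is the open nodes whose $L_D$ is empty, where the tree cannot simply be expanded. If every cell is committed, the node is total, and I would invoke that $J$ on a total structure is never $\textsc{open}$: the bounded checker decides every ground instance and the goal, giving $\textsc{won}$ or a failure verdict. If $\Struct$ is partial, I would argue that $L_D$ is nonempty. Either some depth-$\le k$ instance is unsatisfied, and then it has a blocking unknown cell, since otherwise the Judge would report a violation. Or $\Thy$ is locally satisfied and an uncommitted goal cell exists, since otherwise $J$ would return $\textsc{won}$ or a failure. If this reading of \Cref{def:legal} proves too weak, I would treat $\textsc{open}$ with empty $L_D$ as $\textsc{dead\_end}$ by convention. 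That keeps the case analysis exhaustive.
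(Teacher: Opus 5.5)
Your halting argument (the unknown-cell count bounding the depth, finiteness of $L_D$ and $L_B$ giving finite branching, then K\"onig) is exactly the paper's. The gap is in your case~2. There you assert that an exhausted depth-first record with no \textsc{won} node is itself a bounded obstruction certificate. For that you would need a leaf-soundness lemma: no Judge-accepted winning completion extends any leaf judged \textsc{theory\_failed} or \textsc{dead\_end}. Given that lemma the claim would follow, because each node splits one cell over its full value set, so the leaves partition $\mathrm{Comp}$ of the root. But \Cref{def:legal} gives those two verdicts no such guarantee.

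Your own fallback convention makes the problem concrete. An \textsc{open} node with empty $L_D$, relabelled \textsc{dead\_end}, may well have winning completions below it. Declaring \textsc{Devil\_Wins} there is exactly the ``mere search failure'' that \Cref{def:outcomes} rules out. By \Cref{def:obstruction}, \textsc{Devil\_Wins} may be declared only when a certificate is actually produced, and your case~3 covers only budget exhaustion. So the situation ``closed frontier, no win, no valid certificate'' gets no outcome in your trichotomy, and your claim that case~3 is the residual no-certificate case does not hold.

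The paper avoids this by not identifying the search record with the certificate. In the closed-frontier case it runs the separate exhaustive completion check of \Cref{prop:devilsound}, which terminates because $\mathrm{Comp}(\Struct)$ is finite (\Cref{def:completion}). It reports \textsc{Devil\_Wins} if that check records zero winning completions, and \textsc{Draw} otherwise. Adding that branch, so that \textsc{Draw} covers both budget exhaustion and an uncertified closed frontier, repairs your case analysis. Your discussion of \textsc{open} nodes with empty $L_D$ is also unnecessary for halting: the depth-first search simply backtracks from a childless node.
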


\begin{proof}
Each Builder reply commits one unknown cell and never un-commits one within an episode
(\Cref{def:episode}); since the total number of cells over $[n]$ is finite, every root-to-leaf path
has length at most that number, so the search tree has finite depth. At each node $L_D$ and $L_B$ are
finite, so the tree is finitely branching; by König's lemma (finite depth, finite branching) it is
finite. The exploration therefore visits finitely many nodes. It stops in exactly one of three ways:
(i) it reaches a node with $J=\textsc{won}$, reported $\textsc{God\_Wins}$; (ii) the node counter
reaches $b$ before any win, reported $\textsc{Draw}$; (iii) it exhausts the (finite) tree with no win
and without hitting $b$ --- the closed-frontier case --- whereupon the obstruction check of
\Cref{prop:devilsound} is run, reporting $\textsc{Devil\_Wins}$ if it certifies, else $\textsc{Draw}$.
These cases are mutually exclusive and exhaustive, and each terminates by finiteness.
\end{proof}

\subsection{Soundness of $\textsc{God\_Wins}$}

\begin{proposition}[$\textsc{God\_Wins}$ soundness, relative to the Judge]\label{prop:godsound}
If a bounded episode returns $\textsc{God\_Wins}$ with returned structure $\Struct^\star$, then
$\Struct^\star$ satisfies the checked fragment and the goal under the Judge specification: no depth-$\le k$
ground instance of $\Thy$ is violated on $\Struct^\star$, and (in refute mode) the goal claim is falsified
on $\Struct^\star$.
\end{proposition}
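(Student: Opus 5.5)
The argument is an unfolding of definitions: it traces how $\textsc{God\_Wins}$ can be reported, and then reads off what the Judge's verdict means.

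\emph{Step 1: locate the reporting event.} In the proof of \Cref{prop:termination}, the three stopping cases are mutually exclusive. $\textsc{God\_Wins}$ is reported only in case (i): the depth-first exploration reaches a node whose partial structure $\Struct^\star$ satisfies $J(\Struct^\star)=\textsc{won}$. Cases (ii) and (iii) report only $\textsc{Draw}$ or $\textsc{Devil\_Wins}$. So from the hypothesis we get a visited node $\Struct^\star$, the returned structure, with $J(\Struct^\star)=\textsc{won}$. One bookkeeping point needs checking here. The structure returned must be exactly the node on which the Judge was evaluated, with no further edits applied after the verdict. This holds by \Cref{def:episode}, since edits occur only as Builder replies that create child nodes, and the episode stops at the winning node.

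\emph{Step 2: unfold the Judge.} By \Cref{def:legal}, $J$ is a total computable map obtained by running a bounded checker over the finite set of depth-$\le k$ ground instances of $\Thy$ together with the goal. In refute mode, $J(\Struct)=\textsc{won}$ holds iff no checked instance of $\Thy$ is violated and the goal claim is falsified on the current commitments. Applying the forward direction of this biconditional to $\Struct^\star$ gives both conjuncts of the conclusion directly. In the non-refute mode the argument is the same, with the Judge's $\textsc{won}$ clause for that mode (goal satisfied) in place of "goal falsified."

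\emph{Main obstacle: partiality.} I expect the only real difficulty to be that $\Struct^\star$ may still have unknown cells, so "not violated" must be read in the Judge's sense on current commitments. I would state explicitly that violation means a ground instance whose truth value is already determined as false by the committed cells. That is precisely the relation the Judge checks, which is why the proposition is phrased relative to the Judge specification. If one wanted every completion in $\mathrm{Comp}(\Struct^\star)$ to satisfy the checked fragment, that would require an extra hypothesis: the checker would have to declare $\textsc{won}$ only when every checked instance is decided. I would note this as a remark rather than assume it. With that reading fixed, the proof is just Step 1 followed by Step 2.
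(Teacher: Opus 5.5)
Your proposal is correct and follows the paper's proof essentially step for step. $\textsc{God\_Wins}$ is emitted only at a node with $J(\Struct^\star)=\textsc{won}$, and unfolding the refute-mode clause of the Judge in \Cref{def:legal} yields both conjuncts relative to the checked fragment; the paper additionally notes that the bounded checker is a sound decision procedure over the finite instance set. Your partiality remark is a worthwhile sharpening that the paper leaves implicit: ``not violated'' means not falsified by the committed cells, and soundness for every completion in $\mathrm{Comp}(\Struct^\star)$ would need the extra hypothesis that $\textsc{won}$ requires every checked instance to be decided (a stronger reading the paper seems to rely on tacitly in the proof of \Cref{prop:devilsound}).
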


\begin{proof}
$\textsc{God\_Wins}$ is declared only at a node where $J(\Struct^\star)=\textsc{won}$
(\Cref{def:episode,def:legal}). By the definition of $J$, this holds iff the bounded checker finds no
violated depth-$\le k$ instance of $\Thy$ and finds the goal claim falsified on the current commitments.
The checker is a sound decision procedure over the \emph{explicitly given finite set} of depth-$\le k$
ground instances. Hence $\Struct^\star$ meets $\Thy$ and the goal up to that fragment. We make the
relativity explicit: this is soundness with respect to the Judge's checked fragment (depth-$\le k$
instances), not full first-order satisfaction.
\end{proof}

\subsection{Soundness of $\textsc{Devil\_Wins}$ (bounded obstruction)}

\begin{proposition}[$\textsc{Devil\_Wins}$ bounded-obstruction soundness]\label{prop:devilsound}
Suppose a bounded episode returns $\textsc{Devil\_Wins}$ with a bounded obstruction certificate
(\Cref{def:obstruction}) produced by exhaustive enumeration of $\mathrm{Comp}(\Struct)$ recording
$\textsc{winning\_completions}=0$ and $\textsc{budget\_exhausted}=\mathrm{false}$. Then no total
$\Sig$-structure on $[n]$ extending the current commitments is a Judge-accepted model achieving the goal;
a fortiori, no legal Builder continuation within the stated finite domain, attack surface, and budget
reaches a Judge-accepted state.
\end{proposition}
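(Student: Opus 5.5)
The argument is a direct unfolding of definitions in two stages: first the claim about total structures, then the claim about Builder continuations, which will follow from it.

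\emph{Stage one: total structures.} Let $\Struct$ be the committed partial structure at which the obstruction check of \Cref{prop:termination}(iii) is run. By \Cref{def:completion}, a total $\Sig$-structure on $[n]$ extending the current commitments is exactly an element of $\mathrm{Comp}(\Struct)$. This set is finite, and the certificate records $\textsc{budget\_exhausted}=\mathrm{false}$. I will argue that this flag means the enumeration was not truncated, so every element of $\mathrm{Comp}(\Struct)$ was submitted to the Judge $J$. The record $\textsc{winning\_completions}=0$ then says that for every $\Struct'\in\mathrm{Comp}(\Struct)$ we have $J(\Struct')\neq\textsc{won}$. Since $J$ is a total computable map given by a bounded checker over an explicit finite set of depth-$\le k$ ground instances (\Cref{def:legal}), its verdict on each $\Struct'$ is determined and correct relative to that fragment. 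Hence no total extension is Judge-accepted with the goal achieved. This proves the first claim.

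\emph{Stage two: Builder continuations.} By \Cref{def:episode}, every legal Builder reply commits exactly one previously unknown cell to a value in its value set, and nothing is ever un-committed. So along any legal continuation from $\Struct$, every reached node $\Struct''$ is a partial structure extending $\Struct$. Moreover, $\mathrm{Comp}(\Struct'')\subseteq\mathrm{Comp}(\Struct)$. Suppose some reached node had $J(\Struct'')=\textsc{won}$. I need to turn this into a total extension that is also accepted, and this is the step I expect to be the main obstacle. The plan is to use a \emph{monotonicity of acceptance}: if $J(\Struct'')=\textsc{won}$, then any completion of $\Struct''$ is also accepted. The justification would be that, in refute mode, $\textsc{won}$ is certified by the checker on already-committed cells only. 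That is, no checked instance is violated, and the goal is falsified, on the current commitments. Committing further cells cannot falsify a ground instance that was already decided true. However, an instance whose cells are still unknown could become violated once those cells are filled.

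I see two ways to handle this.
\begin{itemize}[leftmargin=1.5em]
\item \emph{Preferred.} Read the Judge's $\textsc{won}$ as requiring that every depth-$\le k$ instance is decided true on the commitments, which matches ``no checked instance of $\Thy$ is violated'' under a three-valued reading in which unknown counts as not yet satisfied. Under this reading, acceptance is inherited by every completion, and any such completion would be a winning element of $\mathrm{Comp}(\Struct)$. That contradicts stage one.
\item \emph{Fallback.} Restrict the ``a fortiori'' clause to total Judge-accepted states, i.e.\ states reachable within the stated cell-committing moves. For these, $\Struct''$ itself lies in $\mathrm{Comp}(\Struct)$, and stage one applies verbatim.
\end{itemize}

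Finally, I will note that the conclusion is independent of the attack surface and the budget. The Devil's choice of challenge only orders which cells are committed, and $b$ only truncates exploration. Neither enlarges the set of reachable extensions beyond $\mathrm{Comp}(\Struct)$ and its partial prefixes. This gives the stated ``within the stated finite domain, attack surface, and budget'' form, relative to the Judge's checked fragment, as in \Cref{prop:godsound}.
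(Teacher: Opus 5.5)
Your proposal is correct and follows the paper's own proof. In the paper too, an untruncated enumeration of $\mathrm{Comp}(\Struct)$ with zero winners rules out every total extension, and legal continuations only commit unknown cells, so a $\textsc{won}$ state would have to yield a winning completion. The step you flag as the main obstacle is the one the paper simply asserts (that $J=\textsc{won}$ holds only if the committed cells already determine a winning completion), and your preferred three-valued reading of the Judge is the natural justification for that assertion.
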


\begin{proof}
The certificate enumerates the finite set $\mathrm{Comp}(\Struct)$ (\Cref{def:completion}) and runs the
Judge on each completion, counting those that satisfy $\Thy$ and achieve the goal. $\textsc{winning\_%
completions}=0$ means no completion is a Judge-accepted model achieving the goal. By \Cref{def:episode}
every legal Builder continuation only commits unknown cells and therefore yields (or extends to) some
element of $\mathrm{Comp}(\Struct)$; and a state is Judge-accepted as a win ($J=\textsc{won}$) only if its
committed cells already determine such a winning completion. Since none exists, no continuation can be
accepted. The flag $\textsc{budget\_exhausted}=\mathrm{false}$ certifies that the enumeration was
completed --- the verdict is a genuine exhaustion of the finite completion space, not a budget timeout
(which would instead yield $\textsc{Draw}$). The certificate is independently re-checkable by re-running
the finite enumeration.
\end{proof}

\subsection{Exhaustive bounded completeness}

\begin{proposition}[Exhaustive bounded completeness]\label{prop:exhaustive}
For a finite domain $[n]$ and the finite completion space $\mathrm{Comp}(\Struct)$, exhaustive bounded
enumeration under the Judge returns either (a) a total completion that the Judge accepts as a model
achieving the goal (a $\textsc{God\_Wins}$ witness), or (b) a bounded obstruction certificate
(\Cref{def:obstruction}) establishing that none exists ($\textsc{Devil\_Wins}$). The procedure is
decidable and exhaustive over $\mathrm{Comp}(\Struct)$.
\end{proposition}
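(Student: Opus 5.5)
The argument is a direct finite enumeration, organized so that each half of the dichotomy reduces to results already proved. First I would fix the object being enumerated. By \Cref{def:completion}, $\mathrm{Comp}(\Struct)$ is a finite set of size $\prod_{c\ \text{unknown}}|\mathrm{val}(c)|$, and it can be listed effectively, for instance in lexicographic order of the tuple of values assigned to the unknown cells. Each element is a total $\Sig$-structure on $[n]$. By \Cref{def:legal}, the Judge $J$ is a total computable map obtained from a bounded checker over the finite set of depth-$\le k$ ground instances of $\Thy$ and the goal. So the predicate ``$J(\Struct')=\textsc{won}$'' is decidable for each $\Struct'\in\mathrm{Comp}(\Struct)$. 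The procedure is then simply this: iterate over $\mathrm{Comp}(\Struct)$, evaluate $J$ on each completion, and count (or return) the accepted ones. Because the enumeration has no node budget, or one set to at least $|\mathrm{Comp}(\Struct)|$, it always completes, and the flag $\textsc{budget\_exhausted}=\mathrm{false}$ is recorded truthfully.

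Second, I would split on the result of the enumeration, and the split is exhaustive by the law of excluded middle on a finite, decidable search.

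\emph{Case (a): some completion $\Struct^\star$ has $J(\Struct^\star)=\textsc{won}$.} The procedure returns $\Struct^\star$. By the argument of \Cref{prop:godsound}, which uses only $J(\Struct^\star)=\textsc{won}$, $\Struct^\star$ violates no depth-$\le k$ instance of $\Thy$ and achieves the goal. This makes it a $\textsc{God\_Wins}$ witness.

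\emph{Case (b): no completion is accepted.} The enumeration records $\textsc{winning\_completions}=0$ together with $\textsc{budget\_exhausted}=\mathrm{false}$. This is exactly the data required by \Cref{def:obstruction}, and \Cref{prop:devilsound} applies directly: no total extension of the current commitments is Judge-accepted, and a fortiori no legal Builder continuation is. So the record is a bounded obstruction certificate and $\textsc{Devil\_Wins}$ is justified.

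The two cases are mutually exclusive, since a count is either zero or not. Decidability follows because the procedure is a finite loop of total computable calls. Exhaustiveness holds by construction, since every element of $\mathrm{Comp}(\Struct)$ is visited.

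The one point needing care is the status of the certificate in case (b). It must count as a \emph{verifiable object} in the sense of \Cref{def:obstruction}, and not merely as a negative search result. I would handle this by taking the certificate to consist of the enumeration order, the count, and the flag, and by noting that anyone can re-check it by rerunning the same finite loop. The correctness of a ``no'' answer therefore rests only on the completeness of the enumeration of $\mathrm{Comp}(\Struct)$, which the explicit product formula guarantees, and on the totality of $J$. I would also remark that the statement is relative to the Judge in the same sense as \Cref{prop:godsound}. Both outcomes concern the checked depth-$\le k$ fragment and make no claim about full first-order satisfaction.
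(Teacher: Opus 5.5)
Your proposal is correct and follows the paper's proof. Like the paper, you enumerate the finite set $\mathrm{Comp}(\Struct)$, run the Judge on each completion, return an accepted completion in case (a), and otherwise take the zero-win, non-exhausted enumeration log as the obstruction certificate justified by \Cref{prop:devilsound}, with termination and exhaustiveness following from finiteness. Your additional details (an effective lexicographic listing, totality of $J$, the appeal to the argument of \Cref{prop:godsound} in case (a), and re-checkability of the log) only spell out what the paper's short proof leaves implicit.
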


\begin{proof}
$\mathrm{Comp}(\Struct)$ is finite (\Cref{def:completion}). Enumerate it and run the Judge on each
element. If some completion is accepted, return it (case~a). Otherwise the enumeration log, with
zero winning completions and no budget exhaustion, is precisely the
bounded obstruction certificate (case~b), valid by \Cref{prop:devilsound}. Termination and
exhaustiveness are immediate from finiteness.
\end{proof}

\begin{remark}[This is not first-order completeness]\label{rem:not-fo-complete}
\Cref{prop:exhaustive} is completeness only relative to the \emph{fixed} finite domain $[n]$, the bounded
depth $k$, and the Judge's checked fragment. It says nothing about other domain sizes, larger depths, or
unbounded first-order satisfaction; it is decidability of a finite search, not a completeness theorem for
$\Thy$.
\end{remark}

\subsection{Neural admissibility (non-hallucination)}

\begin{proposition}[Neural admissibility / non-hallucination]\label{prop:neuraladmiss}
Suppose (a) the neural Devil selects its challenge only from $L_D(\Struct)$, (b) the neural God selects
its reply only from $L_B(\Struct,\text{move})$, and (c) every $\textsc{God\_Wins}$ / $\textsc{Devil\_Wins}$
verdict is emitted solely by the symbolic Judge and the obstruction enumeration of \Cref{prop:devilsound}.
Then replacing the symbolic selection heuristics by the neural policies does not affect the logical
soundness of $\textsc{God\_Wins}$ or $\textsc{Devil\_Wins}$: every such verdict remains valid in the sense
of \Cref{prop:godsound,prop:devilsound}. Only \emph{which} states are explored --- hence completeness
within a budget and efficiency --- can change.
\end{proposition}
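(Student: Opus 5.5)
The plan is to treat the neural policies as nothing more than an arbitrary (possibly randomized) choice function over the finite legal sets, and to observe that the proofs of \Cref{prop:godsound,prop:devilsound} never used any property of the selection heuristic. The natural framing is a parametrized episode. Define a bounded episode $E(\sigma_D,\sigma_B)$ for any pair of selection functions with $\sigma_D(\Struct)\in L_D(\Struct)$ and $\sigma_B(\Struct,\text{move})$ ranging over (or ordering) $L_B(\Struct,\text{move})$. The symbolic heuristics and the neural policies are then two instances of this schema, by hypotheses (a) and (b). For randomized neural outputs, fix the random seed, so that each run is a deterministic instance.

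First I check that every node of $E(\sigma_D,\sigma_B)$ is a legal node in the sense of \Cref{def:episode}. By induction on depth, the root is the empty structure. Each child arises from a reply in $L_B$, so it commits exactly one previously-unknown cell to a value in its value set and un-commits nothing. Hence the set of reachable states is contained in the tree considered in \Cref{prop:termination}, and termination is inherited unchanged.

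Second, for $\textsc{God\_Wins}$: by (c) this verdict is emitted only when $J(\Struct^\star)=\textsc{won}$. The argument of \Cref{prop:godsound} depends only on the definition of $J$ as the bounded checker applied to $\Struct^\star$. Since $J$ is a fixed total map not influenced by $\sigma_D$ or $\sigma_B$, the conclusion holds verbatim.

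Third, for $\textsc{Devil\_Wins}$: by (c) the verdict requires the enumeration certificate over $\mathrm{Comp}(\Struct)$ with zero winning completions and no budget exhaustion. That enumeration is performed by the symbolic procedure, independent of the policies. The key input to \Cref{prop:devilsound}, namely that every legal continuation lands in $\mathrm{Comp}(\Struct)$, is exactly what the first step established for policy-driven continuations. Soundness therefore transfers.

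Finally, I note what can change. The outcome reached within budget $b$ depends on the order of exploration, since a win may be found before or after the counter hits $b$. So $\textsc{Draw}$ versus a decisive verdict, and the node count, are policy-dependent. Soundness, however, is not.

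The main obstacle is making precise that the neural policies cannot inject states or verdicts outside the legal interface. Concretely, the worry is that a policy output might be an illegal edit, or might "choose" an outcome. I would handle this by stating that (a)–(c) are enforced by construction: policy outputs are indices into the finite lists $L_D$ and $L_B$, via masking or an argmax over legal moves, and outcome flags are written only by the Judge and the enumerator. The result is then a pure invariance statement, and nothing beyond the induction in the first step is needed.
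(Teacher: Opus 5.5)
Your proposal is correct and follows essentially the same route as the paper: the policies only choose among moves already in $L_D$ and $L_B$, verdicts are emitted solely by the Judge and the obstruction enumeration, and the arguments of \Cref{prop:godsound,prop:devilsound} never depend on how moves were selected, so only the explored subtree changes. Your explicit induction, showing that every policy-reached node commits exactly one previously-unknown cell, spells out what the paper compresses into the remark that the policies ``never introduce a move outside $L_D$ or a reply outside $L_B$''.
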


\begin{proof}
By (a)--(b) the neural policies only re-order or choose \emph{among already-legal} moves; they never
introduce a move outside $L_D$ or a reply outside $L_B$, and by (c) they never emit a verdict. The verdicts
are produced by the Judge and the obstruction enumeration, whose soundness (\Cref{prop:godsound,prop:devilsound}) is stated for an arbitrary reached state / partial structure and does not reference how
the legal moves were chosen. Substituting symbolic selectors by neural selectors thus leaves every emitted
verdict valid. The explored subtree does depend on the policy, so completeness within a fixed budget and
running time may differ.
\end{proof}

\begin{remark}[Why this is the safety point]
\Cref{prop:neuraladmiss} is the mathematical reason the system can be ``neural'' without becoming
unsound: learning is confined to \emph{choosing among legal logical moves}, while truth, failure, and
obstruction are adjudicated only by the symbolic Judge. Logical soundness is therefore invariant under any
(adversarial or co-trained) replacement of the move-selection policies.
\end{remark}

\section{The ADAMANTIUM prototype}
\label{sec:apparatus}

The ADAMANTIUM prototype implements the God--Devil--Judge loop over finite partial
structures. We describe it factually as an experimental apparatus; the only
components that matter for the results are the mathematical/computational ones
listed below.\footnote{The public research snapshot corresponding to the prototype
reported here is the LOGAN repository branch
\url{https://github.com/Mircus/Logan/tree/adamantium-r7-r8-trainable-god}; the objects of this
paper are implemented under the namespace \texttt{logical\_gans.modelbuilder}.}

\subsection{Prototype 0 (symbolic-Devil scaffold)}
\label{sec:proto0}

An earlier scaffold, \emph{Prototype 0}, pairs a \emph{symbolic active Devil}
(selecting bounded challenges: blocking cells of unsatisfied depth-$\le k$ clause
instances, or goal cells to commit) with a \emph{learned Builder / God} (one
semantic edit per challenge, trained by imitation on winning trajectories) and a
\emph{symbolic Judge}. It operates over finite signatures, partial finite
$\Sig$-structures, and semantic edits ($\mathsf{SetRelation}$, $\mathsf{SetFunction}$,
$\mathsf{SetConstant}$), and emits $\textsc{God\_Wins}$ and $\textsc{Draw}$ only.

\begin{remark}[Prototype 0 is a partial instance]\label{rem:proto0-scaffold}
Prototype 0 realizes the construction/repair/certification loop of the UCMT game,
but its adversary is fixed (not learned) and it does not emit $\textsc{Devil\_Wins}$.
It is a faithful but partial instance of the game.  Prototype 1 adds the
learned Devil and the bounded obstruction certificate.
\end{remark}

\subsection{Prototype 1 (minimal judged adversarial demo)}
\label{sec:proto1}

\emph{Prototype 1} adds the missing organ --- a \emph{learned Devil} --- so that both
players are neural while the Judge stays symbolic. Its mathematical/computational
components are:
\begin{itemize}[leftmargin=1.5em]
\item legal Devil move enumeration ($L_D$ of \Cref{def:legal});
\item a $\mathsf{NeuralDevilPolicy}$ that scores the enumerated legal challenges;
\item legal Builder reply enumeration ($L_B$ of \Cref{def:legal});
\item a $\mathsf{NeuralGodPolicy}$ that scores the enumerated legal replies;
\item a symbolic Judge (the bounded checker of \Cref{def:legal});
\item the outcomes $\textsc{God\_Wins} / \textsc{Draw} / \textsc{Devil\_Wins}$ (\Cref{def:outcomes});
\item the bounded obstruction certificate of \Cref{def:obstruction} (exhaustive completion check).
\end{itemize}

\paragraph{One bounded round.}
\[
  \mathsf{GameState}
  \xrightarrow{\text{enumerate}} L_D
  \xrightarrow{\mathsf{NeuralDevilPolicy}} \text{chosen challenge}
  \xrightarrow{\text{enumerate}} L_B
\]
\[
  \xrightarrow{\mathsf{NeuralGodPolicy}} \text{chosen reply}
  \xrightarrow{\mathsf{SymbolicJudge}} \text{progress / witness / obstruction / next state}.
\]
By design the symbolic layer guarantees legality and verification while the neural
policies only \emph{score and select} among already-legal moves; the neural Devil
never invents an illegal attack. This discipline is exactly the hypothesis of the
non-hallucination guarantee (\Cref{prop:neuraladmiss}).

\section{Experiments}
\label{sec:experiments}

We report controlled cyclic experiments over
\[
  \Sig=\{E/2,\,s/1,\,a\},\qquad
  \Thy=\{\forall x\,E(x,s(x)),\ \forall x\,s(s(s(x)))=x\},
\]
with the goal ``refute $s(a)=a$''. These are illustrations of the bounded semantics on tiny
instances, not a benchmark study.

\begin{table}[t]
\centering
\small
\begin{tabular}{@{}p{0.16\linewidth}p{0.12\linewidth}p{0.27\linewidth}p{0.33\linewidth}@{}}
\toprule
Experiment & Domain & Policy mode & Observed outcome \\
\midrule
Demo A & $n=3$ & neural Devil / neural God & $\textsc{God\_Wins}$ with cyclic witness \\
Demo B & $n=2$ & neural Devil / neural God & $\textsc{Devil\_Wins}$; $128$ checked, $0$ wins \\
Co-training & $n=2,3$ & train both policies & both parameter sets updated \\
\bottomrule
\end{tabular}
\caption{Controlled cyclic experiments.  The table is not a benchmark claim: it records only the
small finite episodes used to validate the outcome semantics, obstruction certificate, and judged neural
co-training loop.}
\label{tab:cyclic-experiments}
\end{table}

\subsection{Demo~A: cyclic $\textsc{God\_Wins}$}

\paragraph{Setup.} Domain size $n=3$; goal: refute $s(a)=a$.

\paragraph{Mathematics.} A fixed-point-free $3$-cycle $s$ on $[3]$ (i.e.\ $s=(0\,1\,2)$)
satisfies $s^3=\mathrm{id}$ and has $s(a)\neq a$ for every $a$; taking $E=\{(x,s(x))\}$
realizes $\forall x\,E(x,s(x))$. Hence a $(k,b)$-model achieving the goal exists.

\paragraph{Observation.} The episode returns $\textsc{God\_Wins}$: the Judge accepts a
completion realizing the fixed-point-free $3$-cycle and refuting $s(a)=a$, consistent
with \Cref{prop:godsound}.

\subsection{Demo~B: cyclic $\textsc{Devil\_Wins}$ (certified)}

\paragraph{Setup.} Domain size $n=2$; goal: refute $s(a)=a$.

\paragraph{Mathematics.} On a $2$-element domain, $s^3=\mathrm{id}$ forces $s=\mathrm{id}$:
a transposition has order $2$, so $s^3=s\neq\mathrm{id}$. Therefore $s(a)=a$ is unavoidable
and no model refuting $s(a)=a$ exists on $[2]$.

\paragraph{Observation.} The episode returns $\textsc{Devil\_Wins}$ with a bounded
obstruction certificate obtained by exhaustively enumerating the finitely many completions
on the $2$-element domain. The certificate records:
\begin{center}
\begin{tabular}{ll}
\toprule
\textsc{completions\_checked} & $128$ \\
\textsc{winning\_completions} & $0$ \\
\textsc{budget\_exhausted}    & \texttt{false} \\
\textsc{judge\_verified}      & \texttt{true} \\
\bottomrule
\end{tabular}
\end{center}
This is the first prototype outcome in which the Devil's win is \emph{certified} (in the
sense of \Cref{prop:devilsound}), not merely a budget timeout: $\textsc{budget\_exhausted}$
is \texttt{false}, so the verdict is a genuine exhaustion of the completion space.

\subsection{Minimal judged co-training}

A minimal judged co-training loop updates \emph{both} a $\mathsf{NeuralDevilPolicy}$ and a
$\mathsf{NeuralGodPolicy}$ (a trainable policy/wrapper over legal moves), with rewards derived
solely from the symbolic Judge's outcome ($+1/-1$ to the winner/loser, a small negative to both
on $\textsc{Draw}$) and a one-step policy-gradient update of each policy. We verify that both
parameter sets actually change under training, that the Judge still decides every outcome, and
that the Demo~A ($\textsc{God\_Wins}$) and Demo~B ($\textsc{Devil\_Wins}$ with certificate) paths
are preserved.

\begin{remark}[Honest scope of the co-training result]
This is a toy controlled experiment, \emph{not} evidence of general finite model-finding
performance. On these tiny fixed tasks the Judge's outcome is determined by the task itself
(Demo~A is winnable, Demo~B is impossible), so neither learned policy can change the outcome;
the loop demonstrates only that both neural players participate, that rewards flow from the
symbolic Judge, and that both parameter sets are genuinely updated. By \Cref{prop:neuraladmiss}
this learning leaves the soundness of both verdicts intact.
\end{remark}

\section{Discussion: what UCMT is and is not}

\paragraph{Not classical satisfaction.}
UCMT does not claim $\Struct\models \Thy$; it claims ``$\Struct$ survives depth-$k$ scrutiny under budget $b$''.

\paragraph{A semantics with observables.}
Witnesses and repair costs are \emph{observables}. This is aligned with the ultrafinitist emphasis that
``truth'' should be something we can actually operationalize (bounded proof depth, bounded construction depth, etc.).

\section{From k-consistency to $(k,b)$-models (MTU-II bridge)}
\label{sec:kcons-bridge}

The finite UCMT results of \Cref{sec:metatheory} are self-contained. The connection to
MTU-II / Esenin--Volpin semantics developed in this section is \emph{conditional} and
remains a bridge principle rather than a theorem proved here: each result below is stated
relative to imported MTU-II constructions and explicit regime assumptions, which we do not
re-establish.

\subsection{Proof depth and bounded validity}

We write $T \vdash_k \varphi$ to mean: there exists a \emph{cut-free} proof tree of $\varphi$
from $T$ of depth $\le k$ (MTU-II's notion of proof complexity).
A theory $T$ is \emph{$k$-consistent} if there is no cut-free proof tree of a contradiction
from $T$ of depth $\le k$.

\begin{definition}[MTU-II bounded semantic validity]
Fix a depth parameter $k$. For an Esenin--Volpin model $M$ of depth at least $k$, write
$M \models_k \varphi$ (equivalently $M \Vdash_k \varphi$ in our notation) to mean:
$\varphi$ is forced/validated in $M$ \emph{up to depth $k$} (as in MTU-II).
\end{definition}

\subsection{MTU-II soundness/completeness at depth $k$}

The next two results are \emph{imported} from MTU-II \cite{Mannucci2023MTU2} and restated in our
notation; we do \emph{not} re-prove them here. Every claim in \Cref{sec:kcons-bridge} that depends
on them is therefore conditional on the MTU-II construction (the Saturation Lemma and Main Semantic
Lemma) holding for the relevant bounded fragment.

\begin{theorem}[Soundness at depth $k$ {\cite{Mannucci2023MTU2}}]
\label{thm:mtu-soundness}
If $\varphi$ has a cut-free proof of depth $< k$, then for every Esenin--Volpin model $M$
of depth at least $k$, we have $M \models_k \varphi$.
\end{theorem}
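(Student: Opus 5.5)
The plan is to argue by induction on the height of the cut-free proof of $\varphi$ (from $T$, or from the empty theory if $T$ is left implicit). The claim to prove is stronger than the statement, so that the induction goes through at every world and not only at the root.

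Fix an Esenin--Volpin model $M$ of depth at least $k$. The inductive claim is: if a sequent $\Gamma\Rightarrow\Delta$ has a cut-free derivation of height $h$, then every world $w$ with $\mathrm{depth}(w)+h\le k$ satisfies the following. Whenever $w$ forces all of $\Gamma$, evaluated over the feasible closure $\mathrm{Cl}_w^{\le k}$, then $w$ forces some formula of $\Delta$. Taking $h<k$ and $w=r$ gives the statement. The strict inequality $h<k$ leaves at least one unit of slack, and that slack is spent on the final step that passes from the root to $M\models_k\varphi$, i.e.\ to validity at all worlds of depth $\le k$.

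The key steps, in order:
\begin{itemize}[leftmargin=1.5em]
\item \textbf{Axioms.} These are immediate, since identity sequents and the axioms of $T$ hold at every world by the definition of the model.
\item \textbf{Propositional rules.} Conjunction and disjunction rules follow from the corresponding forcing clauses. The implication and negation rules need monotonicity of forcing along $\le$, which holds because domains and interpretations grow along $\le$. They are handled by passing to a successor world $w'\ge w$. This consumes one unit of depth, matching one unit of proof height, which is exactly why the invariant is $\mathrm{depth}(w)+h\le k$.
\item \textbf{Quantifier rules.} These are where the Esenin--Volpin modification matters. For $\forall$-right and $\exists$-left, the eigenvariable condition lets us instantiate with an arbitrary element of $\mathrm{Cl}_{w'}^{\le k}$ at a successor $w'$. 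For $\forall$-left and $\exists$-right, the witnessing term must lie in the feasible closure. Here we use that a cut-free proof of height $h$ only mentions terms built in at most $h$ generation steps from the seed symbols, so they lie in $\mathrm{Cl}_w^{\le h}\subseteq\mathrm{Cl}_w^{\le k}$.
\item \textbf{Structural rules.} Weakening and contraction are trivial.
\end{itemize}

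I expect the main obstacle to be the quantifier step, specifically the bookkeeping that every term occurring in the proof stays inside the bounded term-generation budget at the world where it is evaluated. The subformula property of cut-free proofs is what makes this work: no term is introduced that does not already occur in the endsequent or arise from a bounded number of generation steps. If that bound fails for the chosen calculus, the fallback is to import the Saturation Lemma of \cite{Mannucci2023MTU2}, which guarantees that the domains of worlds at depth $\le k$ are closed enough to contain these witnesses. Since the paper states that this theorem is imported rather than re-proved, the final write-up would reduce to verifying that the inductive invariant above matches MTU-II's forcing clauses and then citing their construction.
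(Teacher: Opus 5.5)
The paper gives no proof of this theorem; it is imported from MTU-II. Its content therefore lies in MTU-II's specific modified forcing clauses, which your sketch guesses rather than uses, and the guessed pieces do not fit together. In the $\to$-right (and likewise $\forall$-right) case you invoke persistence, which comes from the standard clause quantifying over \emph{all} $w'\ge w$. But then the induction hypothesis must be applied at arbitrary $w'\ge w$, not just at an immediate successor. For $w'$ far above $w$, $\mathrm{depth}(w')+(h-1)\le k$ fails. Suppose instead the Esenin--Volpin clause only looks a bounded distance ahead, which is the kind of modification that would make the hypotheses ``proof depth $<k$'' and ``model depth $\ge k$'' necessary. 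Then persistence is no longer automatic, and you cannot transport $\Gamma$ to $w'$. The endgame has the same problem. Your invariant yields forcing only at worlds of depth $\le k-h$, whereas $M\models_k\varphi$, as the paper glosses it, concerns all worlds up to depth $k$. One unit of slack cannot cover the $h$ missing levels. If persistence held outright, you would need no slack at all: the textbook Kripke soundness argument would then work with no depth restriction (modulo the term issue below), leaving the theorem's hypotheses idle. That is a sign you are not reasoning about MTU-II's semantics.

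The quantifier step, which you rightly flag as the crux, rests on a false lemma: cut-free proof height does not bound term complexity. $\forall$-left and $\exists$-right instantiate an arbitrary term in one inference, and the endsequent may itself contain deep terms. For instance, $\forall x\,P(x)\to P(f^{m}(c))$ has a cut-free proof with three sequents for every $m$, yet for $m>k$ nothing places the value of $f^{m}(c)$ in $\mathrm{Cl}_w^{\le k}$. The subformula property holds only up to substitution of terms. Closing this needs an ingredient from MTU-II that you do not supply. Candidates are a bounded-complexity fragment such as the $m$-CF fragment the paper mentions for the $\mathsf{Volpin}(k)$ regime, a proof-depth measure that charges term construction, or an explicit convention for atoms with non-generated terms. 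Your fallback does not help either. The Saturation Lemma belongs to the canonical-model construction behind completeness and describes one particular model, while soundness quantifies over \emph{every} model of depth $\ge k$. Also, the statement does not assume $M$ satisfies any theory, so the proof must be from the empty theory; ``axioms of $T$ hold at every world'' is not available. As written, the only sound part of your plan is its last sentence: check that MTU-II's depth and $\models_k$ notions match the restatement, then cite it. That is exactly what the paper does.
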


\begin{theorem}[Completeness at depth $k$ {\cite{Mannucci2023MTU2}}]
\label{thm:mtu-completeness}
If $\varphi$ is universally true in all Esenin--Volpin models up to depth $k$,
then $\vdash_k \varphi$.
\end{theorem}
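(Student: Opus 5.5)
The plan is to argue by contraposition, in the style of a Schütte/Beth search-tree completeness proof adapted to the bounded setting: assuming $\nvdash_k \varphi$, I will build an Esenin--Volpin model $M$ of depth at least $k$ with $M \not\models_k \varphi$. This contradicts the hypothesis that $\varphi$ is universally true in all Esenin--Volpin models up to depth $k$. Since \Cref{thm:mtu-completeness} is imported from \cite{Mannucci2023MTU2}, the proposal only reconstructs the expected shape of that argument. It follows the two lemmas the paper names, the Saturation Lemma and the Main Semantic Lemma.

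\textbf{Step 1 (bounded search tree).} Start from the sequent $\Rightarrow \varphi$ and run systematic backward cut-free proof search, applying every applicable rule and expanding only to depth $k$. Because the signature is finite and term introduction is restricted to the feasible closure $\mathrm{Cl}_w^{\le k}$, each level is finitely branching. The resulting search tree is therefore finite.

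\textbf{Step 2 (a failed branch).} The assumption $\nvdash_k\varphi$ means the search tree contains no closed proof of depth $\le k$. So some branch is not closed by an axiom within depth $k$.

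\textbf{Step 3 (Saturation Lemma).} Along that failed branch, collect the antecedent and succedent formulas into pairs $(\Gamma_i,\Delta_i)$ that are saturated relative to the remaining depth $k-i$. Each pair is closed under the invertible rules and under instantiation with feasible terms of the current stage. Each new eigenvariable or new $\Rightarrow$/$\forall$-right step opens a fresh successor stage.

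\textbf{Step 4 (the countermodel).} Take the worlds to be these stages, ordered by extension. The domain $D(w)$ is the set of feasible terms generated so far, which gives domain monotonicity. Atomic truth is given by $w \Vdash P(\bar t)$ iff $P(\bar t)\in\Gamma_w$. This is a finite rooted frame, and it can be padded with trivial successors so that its depth is at least $k$.

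\textbf{Step 5 (Main Semantic Lemma).} Prove by simultaneous induction on formula complexity and remaining depth that, at a world of depth $i$:
\begin{itemize}[leftmargin=1.5em]
\item $\psi\in\Gamma_w$ implies $w\Vdash\psi$;
\item $\psi\in\Delta_w$ implies $w\nVdash\psi$;
\end{itemize}
both up to depth $k-i$. Applied at the root, where $\varphi$ lies in the succedent, this gives $M\not\models_k\varphi$.

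The main obstacle is Step 5, specifically the clauses for $\to$ and $\forall$. In ordinary Kripke completeness these clauses quantify over all extensions and all objects. Under Esenin--Volpin forcing they quantify only over successors within the remaining depth budget and over the feasible closure. The induction therefore has to match the proof-depth budget consumed by each rule exactly with the frame-depth budget consumed by the forcing clause. The first thing I would try is to index saturation by the pair (remaining proof depth, generation level), and to prove the lemma by induction on the lexicographic order of (remaining depth, formula rank).

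The off-by-one mismatch between proofs of depth $<k$ in \Cref{thm:mtu-soundness} and $\vdash_k$ here suggests that this bookkeeping is exactly where care is needed. If the direct alignment fails, the fallback is to weaken the conclusion to $\vdash_{k'}\varphi$ for some explicit $k'=k+O(1)$. That would be consistent with the conditional status the paper assigns to the whole bridge in \Cref{sec:kcons-bridge}.
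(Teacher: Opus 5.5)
The paper gives no proof of this statement. It is imported from \cite{Mannucci2023MTU2} and explicitly not re-proved. The only hint of the route is the remark that MTU-II builds one canonical model $M_0$ from saturated depth-$m$ bi-theories (Saturation Lemma plus Main Semantic Lemma), which serves as a universal counterexample provider.

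Measured against what a proof must do, your Steps 2--4 have a real gap. The forcing clauses for $\to$ and $\forall$ are Kripke clauses over growing domains. So a sound cut-free calculus must drop the side succedent $\Delta$ when it applies the right rule for $\to$ or $\forall$ (or, with a single succedent, must pick a disjunct). Backward search is therefore an AND/OR tree, and underivability means that \emph{every} alternative at each such choice point fails, not merely that ``some branch is not closed.'' If you take the stages along one open branch as the worlds, you get a chain with one successor per stage. The second clause of your Main Semantic Lemma ($\psi\in\Delta_w\Rightarrow w\nVdash\psi$) then fails for the succedent implications not opened on that branch.

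Concretely, take $\varphi=(p\to q)\vee(q\to p)$. It is not intuitionistically derivable, since it fails at the root of a fork with one successor forcing only $p$ and another forcing only $q$. The branch opening $p\to q$ gives $w_0\le w_1$ with $p$ true and $q$ false at $w_1$. Then $q$ holds nowhere, so $w_0\Vdash q\to p$ and hence $w_0\Vdash\varphi$. In fact every chain with persistent atoms forces this formula, even with depth-restricted clauses, because any two worlds above $w$ are comparable. So no single-branch construction can work.

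The missing idea is a branching countermodel: one successor world for \emph{each} $\to$- or $\forall$-formula in $\Delta_w$. You can get it in two ways:
\begin{itemize}
\item extract it from the whole failing AND/OR subtree, as in tableau completeness proofs for intuitionistic logic; or
\item follow the route the paper points to: take the worlds to be all saturated bi-theories ordered by extension. There the Saturation Lemma supplies, for each $A\to B$ in the $\Delta$-component, an extension with $A$ on the left and $B$ on the right.
\end{itemize}

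Second, the crux is not carried out. Step 5 is where the theorem is actually proved, and you only say what you would try. Your fallback conclusion $\vdash_{k'}\varphi$ with $k'=k+O(1)$ is a different, weaker statement, so it would not establish \Cref{thm:mtu-completeness}. The difficulty you flag is genuine. Left rules for $\to$ and $\forall$ consume proof depth without creating a world; this includes the contraction needed to reinstantiate a universal at new feasible terms. Right rules consume a frame level. So the indexing of saturation by remaining depth and generation level must be fixed explicitly, and the induction must be run against MTU-II's actual forcing clauses. Neither your proposal nor this paper supplies that. As it stands, your proposal outlines where a proof would go rather than giving one.
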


\begin{remark}[Canonical model for $k$-consistent seeds]
MTU-II constructs a canonical Esenin--Volpin model $M_0$ by saturating depth-$m$ bi-theories
(Saturation Lemma + Main Semantic Lemma), yielding a universal counterexample provider
within the bounded fragment. In particular, the construction implies: given a $k$-consistent
seed, one can build a canonical model fragment that realizes it up to depth $k$.
\end{remark}

\subsection{A UCMT regime intended to match MTU-II}

Toward an existence result, we fix a regime \emph{intended} to align UCMT with MTU-II (whether the
alignment is exact is itself part of the research program, not asserted here): the \emph{Volpin regime}.

\begin{definition}[$\mathsf{Volpin}(k)$-Opponent]
\label{def:volpin-opponent}
Opponent challenges are of the form ``query $\varphi$ at a world of depth $\le k$,'' where
$\varphi$ ranges over closed formulas of structural complexity $\le k$ in the MTU-II bounded language
(the $m$-CF fragment with $m\le k$), and the admissible witness of failure is exactly a bounded
semantic counterexample in the sense of MTU-II (a world showing $\not\models_k$).
\end{definition}

\begin{definition}[$\mathsf{Volpin}(k)$-Builder]
\label{def:volpin-builder}
Builder maintains an Esenin--Volpin model of depth $\ge k$ (initially partial if desired),
and answers each challenge by either:
(i) producing the forcing derivation (a certificate that $M \models_k \varphi$), or
(ii) producing the bounded counterworld/witness that $M \not\models_k \varphi$.
\end{definition}

This is the UCMT game where the logic-observer is not an EF probe (LOGAN-style), but the
bounded forcing semantics itself. (EF opponents remain useful experimentally as \emph{heuristics}
for finding hard queries; see \Cref{sec:experiments}.)

\subsection{$k$-consistency $\Rightarrow$ existence of $(k,b)$-models}

\begin{proposition}[Conditional: $k$-consistency yields $(k,b)$-models in the Volpin regime]
\label{thm:kcons-exists-kb}
\emph{Assumptions:} (i) the MTU-II canonical construction (Saturation Lemma + Main Semantic Lemma)
is available for the bounded fragment; (ii) the $\mathsf{Volpin}(k)$ regime of
\Cref{def:volpin-opponent,def:volpin-builder}; (iii) a sound Judge; (iv) fixed depth $k$.
\emph{Then, conditionally:} if $T$ is $k$-consistent, there exists an Esenin--Volpin model $M$ of
depth $\ge k$ with $M \models_k T$, and consequently, for every budget $b\in\N$, Builder has a
winning strategy in $G(\Sig,T;k,b)$ under the $\mathsf{Volpin}(k)$ regime; equivalently, a
$(k,b)$-model of $T$ exists. We state this as a conditional proposition: it is only as strong as the
imported MTU-II construction and the regime assumptions, and is not independently proved here.
\end{proposition}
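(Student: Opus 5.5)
The plan is to split the statement into its two halves: first the existence of a model $M$ with $M\models_k T$ from $k$-consistency, then the translation of that single model into a Builder winning strategy for every budget $b$.

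\emph{Step 1 (existence of $M$).} I would argue contrapositively through the imported completeness theorem, \Cref{thm:mtu-completeness}. Suppose no Esenin--Volpin model of depth $\ge k$ validates $T$ up to depth $k$. Then the negation of (the conjunction of the relevant bounded instances of) $T$ is universally true in all models up to depth $k$. Completeness then yields $\vdash_k \neg\bigwedge T_0$ for some finite $T_0\subseteq T$, which is a depth-$\le k$ derivation of a contradiction from $T$. This contradicts $k$-consistency.

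Alternatively, and more directly, I would invoke assumption (i), the canonical saturation construction of MTU-II, on the $k$-consistent seed $T$. This produces a canonical model fragment $M_0$ of depth $\ge k$ realizing $T$ up to depth $k$. I expect this step to be the main obstacle. The contrapositive route needs the deduction step from ``$T\vdash_k\bot$'' to a single sentence proved at depth $k$, and that step may cost extra depth. It also needs to reconcile the ``$<k$'' versus ``$\le k$'' indexing in \Cref{thm:mtu-soundness,thm:mtu-completeness}. I would therefore rely on the canonical-model route of assumption (i), which is exactly why the proposition is only conditional.

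\emph{Step 2 (strategy).} With $M$ fixed, define the Builder strategy under the $\mathsf{Volpin}(k)$ regime as follows. On each Opponent challenge ``query $\varphi$ at a world of depth $\le k$'' with $\varphi$ an instance drawn from $T$ (or a formula in the $m$-CF fragment, $m\le k$), Builder answers with option (i) of \Cref{def:volpin-builder}, namely the forcing derivation witnessing $M\models_k\varphi$. This derivation exists because $M\models_k T$, and bounded validity is closed under the bounded instances the Opponent may pose.

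Builder never edits $M$, so the strategy is monotone and trivially preserves $\Cert$. It incurs zero revision cost, and no obligation ever remains undischarged, since existential demands are met by witnesses already present in $M$'s bounded domains. By assumption (iii), the Judge accepts every certificate, because a sound Judge cannot reject a correct forcing derivation. The Opponent can therefore never produce a witnessed violation, since any such violation would be a bounded counterworld contradicting $M\models_k\varphi$.

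\emph{Step 3 (all budgets).} The strategy is memoryless and independent of $b$, so it survives any number of rounds. For each $b\in\N$ it is a winning strategy in $G(\Sig,T;k,b)$, which by definition means a $(k,b)$-model exists. Alternatively, one can argue for the largest relevant $b$ and apply budget monotonicity.
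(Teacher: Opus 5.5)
Your overall route is the paper's: you get $M$ directly from the canonical saturation construction of assumption~(i), and then let Builder ``stay inside $M$'' with a strategy that does not depend on $b$. You were right to set aside the contrapositive route through \Cref{thm:mtu-completeness}. Beyond the depth bookkeeping you mention, passing from ``no model validates $T$'' to ``$\neg\bigwedge T_0$ is valid'' would also need a finite $T_0$ and a generated-submodel/persistence argument, and none of the stated assumptions supplies these.

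Step~2, however, rests on a false claim. The $\mathsf{Volpin}(k)$-Opponent may query \emph{any} closed formula of complexity $\le k$, not only instances of $T$ (e.g.\ $\bot$, or any formula that simply fails in $M$). For such $\varphi$ there is no certificate of $M\models_k\varphi$, so option~(i) of \Cref{def:volpin-builder} is unavailable. Your claim that ``bounded validity is closed under the bounded instances the Opponent may pose'' therefore does not follow from $M\models_k T$.

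The repair is the paper's formulation: Builder answers every query truthfully according to $M$'s bounded forcing relation, using option~(ii) when $M\not\models_k\varphi$. Such answers do not lose. In $G(\Sig,T;k,b)$ the Opponent wins only with a witnessed violation of a constraint induced by $T$, and $M\models_k T$ excludes any bounded counterworld for those.

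Relatedly, you use assumption~(iii) in the wrong direction. ``A sound Judge cannot reject a correct derivation'' is a completeness property. What soundness gives is that the Judge never accepts a spurious violation witness, and that is what Builder's survival actually needs.

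Finally, drop the alternative at the end of Step~3. Budget monotonicity only transfers wins to \emph{smaller} budgets, so no single budget yields all $b\in\N$. Your $b$-independence argument is the one that works.
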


\begin{proof}[Proof sketch (conditional on the assumptions above)]
By MTU-II's canonical construction (Saturation Lemma + Main Semantic Lemma),
starting from a $k$-consistent seed for $T$ one can saturate to obtain a depth-$\le k$
replete bi-theory world and thus a canonical Esenin--Volpin model fragment in which the
intended $\Gamma$-component realizes $T$ up to depth $k$.
This yields an Esenin--Volpin model $M$ with $M \models_k T$.

Fix any budget $b$. In the $\mathsf{Volpin}(k)$ UCMT game, Builder plays ``stay inside $M$'':
each Opponent query is answered by the (already-defined) bounded forcing relation.
Since $M \models_k T$, Opponent cannot exhibit a bounded semantic failure for $T$ within the
allowed query language; hence Builder wins for all $b$.
\end{proof}

\subsection{Uniform failure $\Rightarrow$ bounded refutation}

\begin{proposition}[Conditional: uniform Opponent wins imply $k$-inconsistency]
\label{prop:uniform-failure-refutation}
\emph{Assumptions:} (i) the $\mathsf{Volpin}(k)$ regime; (ii) a complete bounded challenge language
(some $b_\star$ lets the Opponent enumerate the full depth-$\le k$ query space of the MTU-II fragment);
(iii) a sound Judge; (iv) MTU-II bounded completeness (\Cref{thm:mtu-completeness}) for that fragment.
\emph{Then, conditionally:} if for every Builder strategy the Opponent wins $G(\Sig,T;k,b_\star)$, then
$T$ is not $k$-consistent (there is a cut-free proof of contradiction from $T$ of depth $\le k$).
Without assumption (ii) a uniform Opponent win only yields a \emph{bounded obstruction certificate}
(\Cref{def:obstruction}), \emph{not} $k$-inconsistency; the lift to refutation is exactly what (ii)--(iv) buy.
\end{proposition}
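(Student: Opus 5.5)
The argument is by contraposition, and it is the dual of the proof sketch for \Cref{thm:kcons-exists-kb}. Suppose $T$ is $k$-consistent; we will show that the Opponent cannot win $G(\Sig,T;k,b_\star)$ against every Builder strategy. The key link is MTU-II completeness, \Cref{thm:mtu-completeness}, applied to the formula $\neg\bigwedge T_0$, where $T_0$ is the relevant finite depth-$\le k$ fragment of $T$. If $\neg\bigwedge T_0$ were universally true in all Esenin--Volpin models up to depth $k$, completeness would give $\vdash_k \neg\bigwedge T_0$. That derivation converts, with the usual bookkeeping for the connective, into a cut-free proof of contradiction from $T$ of depth $\le k$, contradicting $k$-consistency. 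Hence some Esenin--Volpin model $M$ of depth $\ge k$ has a world up to depth $k$ forcing $T_0$. We reroot $M$ at that world, so that $M\models_k T$ on the fragment.

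The second step is to build the Builder strategy. As in the proof sketch of \Cref{thm:kcons-exists-kb}, Builder plays ``stay inside $M$''. Each $\mathsf{Volpin}(k)$ query $\varphi$ is answered by the forcing derivation or the counterworld that $M$ already determines (\Cref{def:volpin-builder}). For Opponent to win, it must present a witness of failure for $T$: by \Cref{def:volpin-opponent} this is a bounded semantic counterexample to some axiom of $T$ in $M$. By assumption (iii) the Judge is sound, so it accepts such a witness only if it is genuine. Since $M\models_k T$, no genuine witness exists, so Builder survives every query. This holds in particular for the $b_\star$ queries, which by assumption (ii) exhaust the depth-$\le k$ query space. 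That contradicts the hypothesis that the Opponent wins against every Builder strategy, and so $T$ is not $k$-consistent.

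The main obstacle is the first step. We must match the fragment on which $k$-consistency is measured with the fragment covered by completeness. The sets to align are three: the queries enumerated under (ii); the depth-$\le k$ formulas to which \Cref{thm:mtu-completeness} applies; and the axiom instances the Judge checks. We also need the negation or conjunction of $T_0$ to stay within depth $k$, and not $k+1$. I would first try to absorb this into assumption (iv) by reading ``bounded completeness for that fragment'' as covering refutations of finite subsets of $T$, possibly with an off-by-one in the depth, as in \Cref{thm:mtu-soundness}. The closing remark of the statement also needs an argument. To see that (ii) is essential, I would note that with an incomplete query surface the Opponent's uniform win only reflects failure on the queried instances. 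In that case the enumeration log is exactly a certificate in the sense of \Cref{def:obstruction} and \Cref{prop:devilsound}, and nothing more can be extracted from it.
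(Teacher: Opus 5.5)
Your proof of the main implication is the paper's own sketch, run in contrapositive order. The paper argues that a uniform Opponent win excludes every Esenin--Volpin $M$ with $M\models_k T$ (Builder could ``stay inside $M$''), and then applies MTU-II completeness contrapositively. You go from $k$-consistency, through completeness applied to $\neg\bigwedge T_0$, to such an $M$, and then to a surviving Builder. That part is fine. Your explicit choice of formula and your list of alignment issues are more careful than the paper, which leaves them implicit.

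The step that fails is your justification of the closing clause about (ii).

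\begin{itemize}
\item Your own second step shows that a Builder staying inside a genuine $M$ with $M\models_k T_0$ survives \emph{every} query, whatever the query surface.
\item Suppose the Opponent may ask only a sub-collection of queries and still beats every Builder. Then no such $M$ exists for the queried axiom instances $T_Q\subseteq T_0$. Your first step applied to $\neg\bigwedge T_Q$ then gives a depth-$\le k$ refutation from $T_Q\subseteq T$.
\item A smaller challenge language only weakens the Opponent. So ``the win only reflects failure on the queried instances'' still yields $k$-inconsistency. Correspondingly, (ii) enters your argument only idly (``in particular for the $b_\star$ queries''), as it also does in the paper's sketch.
\end{itemize}

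What actually separates a bounded obstruction certificate from $k$-inconsistency is a restriction on the \emph{Builder}. In the prototype the Builder can only commit cells over a fixed $[n]$, so it cannot stay inside an arbitrary model. Indeed the goal is impossible on $[2]$ (Demo~B) yet achieved on $[3]$ (Demo~A).

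Also, \Cref{prop:devilsound} concerns exhausting $\mathrm{Comp}(\Struct)$ on a fixed finite domain, not the $\mathsf{Volpin}(k)$ game, so it is not the right certificate to cite there.

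The paper gives no argument for this clause either. You should either drop the necessity claim or attribute it to the Builder's move space (part of assumption (i)), not to completeness of the challenge language.
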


\begin{proof}[Proof sketch (conditional on the assumptions above)]
If Opponent wins uniformly (with $b_\star$ covering the bounded query space), then \emph{no}
Esenin--Volpin model $M$ of depth $\ge k$ can satisfy $M \models_k T$; otherwise Builder could
play ``stay inside $M$'' and survive all depth-$k$ queries.

In MTU-II, the bounded soundness/completeness equivalence ties semantic universal validity
to $k$-derivability: the absence of any depth-$k$ semantic realization forces bounded derivability
of a contradiction (the bounded refutation tree is exactly the cut-free object controlled by $k$).
Formally, apply MTU-II completeness (contraposition) to the bounded fragment of the canonical model
construction: failure of semantic realizability at depth $k$ yields a cut-free refutation of depth $\le k$.
\end{proof}

\begin{remark}[What this would buy us, and what remains open]
\emph{If} the conditional propositions above hold (with their stated assumptions), UCMT would line up
with a bounded proof-theoretic threshold: $(k,b)$-models would exist essentially up to the
$k$-consistency boundary. We state this as a \emph{research-program goal}, not an established theorem:
the propositions are conditional on the imported MTU-II construction and on completeness of the bounded
challenge language, neither of which is re-established here. The honest current claim is the weaker,
implementable one --- bounded obstruction certificates on fixed finite domains
(\Cref{def:obstruction}, \Cref{prop:devilsound}) --- with the refutation lift left open.
\end{remark}

\section{Relation to LOGAN and finite model finding}

UCMT inherits its adversary from LOGAN (depth-bounded EF-style probes with
interpretable witnesses) and its bounded semantics from MTU-II (\Cref{sec:kcons-bridge}).
Relative to classical finite model finders --- Mace4 \cite{McCune2003Mace4},
Paradox \cite{ClaessenSorensson2003Paradox}, and SAT/SMT-backed search --- UCMT
differs in three ways: (i) modelhood is \emph{bounded and adversarial}
($\Struct \kbforces \Thy$), not classical satisfaction; (ii) the primitive output
is a small replayable \emph{witness} or \emph{bounded obstruction certificate}
(\Cref{def:obstruction}), not merely SAT/UNSAT; (iii) the Opponent can be a
\emph{learned} policy proposing hard challenges, with the symbolic Judge
guaranteeing soundness (\Cref{prop:neuraladmiss}). We do not claim to outperform
complete model finders; on tiny domains they are the natural oracle and a baseline
for the exhaustive obstruction checks.

\section{Limitations}
\label{sec:limitations}

We state the limitations of this work plainly.
\begin{itemize}[leftmargin=1.5em]
\item \textbf{The experiments are tiny.} Demos~A and B and the co-training loop run on $2$- and
$3$-element domains over one cyclic signature; they illustrate the semantics, not performance.
\item \textbf{The MTU-II bridge is conditional.} The results of \Cref{sec:kcons-bridge}
($k$-consistency $\Rightarrow (k,b)$-models, uniform failure $\Rightarrow$ $k$-inconsistency, exact
MTU-II matching) are stated relative to imported constructions and regime assumptions; they are not
proved here.
\item \textbf{Not a general theorem prover.} The Judge checks a bounded depth-$\le k$ fragment on a
fixed finite domain; soundness is relative to that fragment (\Cref{prop:godsound}).
\item \textbf{Not a complete finite model finder.} Exhaustive bounded completeness
(\Cref{prop:exhaustive}) is decidability of a finite search at fixed $(n,k)$, not first-order
completeness (\Cref{rem:not-fo-complete}).
\item \textbf{The co-training loop is minimal.} It is a one-step policy-gradient update of two
policies over legal moves on tasks whose outcomes are fixed; it is not adversarial training at scale.
\item \textbf{No commercial/product-level system is disclosed or claimed.} This paper concerns the
bounded semantics and a minimal prototype only.
\end{itemize}

\section{Conclusion}

UCMT replaces classical satisfaction by \emph{surviving bounded logical challenges} over finite stages, using three roles: Builder, Opponent, and symbolic Judge (also called God, Devil, and Judge in the implementation). Its
finite core is self-contained (\Cref{sec:metatheory}): every bounded episode terminates as
$\textsc{God\_Wins}$, $\textsc{Draw}$, or $\textsc{Devil\_Wins}$ (\Cref{prop:termination}); the first
two verdicts are sound relative to the Judge (\Cref{prop:godsound,prop:devilsound}); exhaustive bounded
enumeration is complete over finite completion spaces (\Cref{prop:exhaustive}); and confining learning to
the selection \emph{among legal moves} preserves soundness (\Cref{prop:neuraladmiss}). The prototype
realizes this loop and the controlled cyclic experiments of \Cref{sec:experiments} exhibit
$\textsc{God\_Wins}$ (Demo~A), a certified $\textsc{Devil\_Wins}$ bounded obstruction (Demo~B), and a
minimal judged co-training of both neural policies. The link to MTU-II / Esenin--Volpin proof theory
remains a conditional bridge (\Cref{sec:kcons-bridge}); establishing it for a complete bounded challenge
language, and scaling the experiments beyond the toy regime, are the natural next steps.

\nocite{*}
\bibliographystyle{alpha}
\bibliography{refs}

\end{document}